\newcommand{\picscale}{0.8} 
\newtheorem{theorem}{Theorem}
\newtheorem{lemma}[theorem]{Lemma}
\newtheorem{definition}{Definition}
\newcommand{\myskip}[0]{\medskip}
\newcommand{\mi}[1]{\mathit{#1}}
\newcommand{\Nat}{{\mathbb N}}
\newcommand{\powset}[1]{\mathcal{P}({#1})} 
\newcommand{\set}[2]{\{ {#1} \ | \ {#2}\}} 
\newcommand{\AU}[2]{A[#1 \ U \ #2]} 
\newcommand{\AUw}[2]{A[#1 \ U_{w} \ #2]} 
\newcommand{\non}[0]{\neg} 
\newcommand{\limpl}[0]{\rightarrow} 
\newcommand{\sat}[0]{\vDash} 
\newcommand{\notsat}[0]{\not\sat} 
\newcommand{\har}[0]{{\upharpoonright}} 
\newcommand{\shar}[0]{{\lceil}} 
\newcommand{\reaction}[2]{ #1 \; \rightarrow \; #2}
\newcommand{\reactionC}[3]{\reaction{#1}{#2} \;\; \{\; #3 \;\} }
\newcommand{\irule}[2]{\dfrac{\,\textstyle\rule[-1.3ex]{0cm}{3ex}#1\,}%
{\textstyle\rule[-.5ex]{0cm}{3ex}#2}}
\newcommand{\ie}{i.e.\xspace}
\newcommand{\etal}{et al.\xspace}
\def\fair{\Phi}
\def\fairabs{\Phi_{\alpha}}
\def\satfair{\sat_{\fair}}
\def\satfairabs{\sat_{\fairabs}}
\def\satLTL{\sat_{LTL}}
\def\ACTL{\ensuremath{\text{ACTL}^{-}}\xspace}
\def\emptypath{\epsilon}
\def\emptypathinf{\emptypath^{\infty}}
\def\proj{\har}
\def\sproj{\shar}
\def\sprojinf{\sproj^{\infty}}
\def\implies{\limpl}
\newcommand{\trans}[3]{{#1}\xrightarrow{#2}{#3}}
\def\Sem{\ensuremath{\mathit{LTS}}\xspace}
\def\Semabs{\ensuremath{\mathit{LTS_{\alpha}}}\xspace}
\def\re{\ensuremath{\mathit{re}}\xspace}
\def\pro{\ensuremath{\mathit{pro}}\xspace}
\def\cat{\ensuremath{\mathit{cat}}\xspace}
\def\species{\ensuremath{\mathit{species}\xspace}}
\def\comp{\ensuremath{\mathit{comp}}\xspace}
\def\pathway{\ensuremath{P_{\mathit{EGF}}}\xspace} 
\def\GF{\ensuremath{\mathit{GF}}\xspace}
\def\enabled{\ensuremath{\mathit{enabled}}\xspace}
\def\AP{\ensuremath{\mathit{AP}}\xspace}
\def\true{\ensuremath{\mathit{true}}\xspace}
\def\false{\ensuremath{\mathit{false}}\xspace}
\def\occR{\ensuremath{\mi{occurred}(R)}\xspace} 
\newcommand{\reactionRate}[3]{ #1 \; \xrightarrow{#3} \; #2}
\newcommand{\reactionCRate}[4]{\reactionRate{#1}{#2}{#4} \;\; \{\; #3 \;\} }
\def\statefont{\mathbf}
\newcommand{\state}[1]{\ensuremath{\statefont{s_{#1}}}\xspace}
\def\speciesfont{\mathrm}
\mathchardef\mhyphen="2D
\def\ERKPP{\ensuremath{\speciesfont{ERK\mhyphen PP}}\xspace}
\def\ERKPPi{\ensuremath{\speciesfont{ERK\mhyphen PPi}}\xspace}
\def\specI{\ensuremath{\speciesfont{(EGF\mhyphen EGFR^*)2\mhyphen GAP}}\xspace}
\def\Rafstar{\ensuremath{\speciesfont{Raf^*}}\xspace}
\def\compfont{\mathit}
\title{Towards modular verification of pathways:\\fairness and assumptions}
\author{
Peter Dr\'abik
\institute{Istituto di Informatica e Telematica\\
Consiglio Nazionale delle Ricerche\\
Pisa, Italy}
\email{peter.drabik@iit.cnr.it}
\and Andrea Maggiolo-Schettini
\institute{Dipartimento di Informatica\\
Universit\`{a} di Pisa\\
Pisa, Italy}
\email{maggiolo@di.unipi.it}
\and Paolo Milazzo
\institute{Dipartimento di Informatica\\
Universit\`{a} di Pisa\\
Pisa, Italy}
\email{milazzo@di.unipi.it}
}
\begin{document}
\maketitle

\begin{abstract}
Modular verification is a technique used to face the state explosion problem often encountered in the verification of properties of complex systems such as concurrent interactive systems. The modular approach is based on the observation that properties of interest often concern a rather small portion of the system. As a consequence, reduced models can be constructed which approximate the overall system behaviour thus allowing more efficient verification.

Biochemical pathways can be seen as complex concurrent interactive systems. Consequently, verification of their properties is often computationally very expensive and could take advantage of the modular approach.

In this paper we report preliminary results on the development of a modular verification framework for biochemical pathways. We view biochemical pathways  as concurrent systems of reactions competing for molecular resources. A modular verification technique could be based on reduced models containing only reactions involving molecular resources of interest.

For a proper description of the system behaviour we argue that it is essential to consider a suitable notion of fairness, which is a well-established notion in concurrency theory but novel in the field of pathway modelling. We propose a modelling approach that includes fairness and we identify the assumptions under which verification of properties can be done in a modular way.

We prove the correctness of the approach and demonstrate it on the model of the EGF receptor-induced MAP kinase cascade by Schoeberl \etal
\end{abstract}

\section{Introduction}

A big challenge of current biology is understanding the principles and functioning of complex biological systems. Despite the great effort of molecular biologists investigating the functioning of cellular components and networks, we still cannot provide a detailed answer to the question ``how a cell works?''.

In the last decades, scientists have gathered an enormous amount of molecular level information. To uncover the principles of functioning of a biological system, just collecting data does not suffice. Actually, it is necessary to understand the functioning of parts and the way these interact in complex systems. The aim of \emph{systems biology} is to build, on top of the data, the science that deals with principles of operation of biological systems. The comprehension of these principles is done by modelling and analysis exploiting mathematical means. 

A typical scenario of modelling a biological system is as follows. To build a model that explains the behaviour of a real biological system, first a formalism needs to be chosen. Then a model of the system is created, simulation is performed, and the behaviour is observed. The model is validated by comparing the results with the real experiments. The advantage of simulation is not only validation of laboratory experiments, but also prediction of behaviour under new conditions and automation of the whole process.

Simulation can give either the average system behaviour or a number of possible system behaviours. This may be insufficient when one is interested in analysing all the behaviours of a system.

Model checking may be of help. This technique permits the verification of properties (expressed as logical formulae) by exploring all the possible behaviours of a system. This analysis technique typically relies on a state space representation whose size, unfortunately, makes the analysis often intractable for realistic models. This is true in particular for systems of interest in systems biology (such as metabolic pathways, signalling pathways, and gene regulatory networks), which often consist of a huge number of components interacting in different ways, thus exhibiting very complex behaviours. 

Many formalisms originally developed by computer scientists to model systems of interacting components have been applied to biology, also with extensions to allow more precise descriptions of the biological behaviours \cite{1231124,Cardelli:2005fk,1570750,1041036,513277,citeulike:1180143}. Examples of well-established formal frameworks that can be used to model, simulate and model check descriptions of biological systems are \cite{1570750,Fages04modellingand,1342619}.

Model checking techniques have traditionally suffered from the state explosion problem. 
Standard approaches to the solution of this problem are based on abstractions or similar model reduction techniques (e.g.~\cite{186051}). Moreover, the use of Binary Decision Diagrams (BDDs) \cite{Clarke99} to represent the state space (symbolic model checking) often allows significantly larger model to be treated~\cite{DBLP:journals/iandc/BurchCMDH92}.

A method for trying to avoid the state space explosion problem is to consider a decomposition of the system, and to apply a modular verification technique allowing global properties to be inferred from properties of the system components.
This approach can be particularly efficient when the modelled systems consist of a high number of components, whereas properties of interest deal only with rather small subset of them. This is often the case for properties of biological systems.
Hence, for each property it would be useful to be able to isolate a minimal fragment of the model that is necessary for verifying such a property. If such a fragment can be obtained by working only on the syntax of the model, the application of a standard verification technique on the semantics of the fragment avoids the state explosion.

In previous work we developed a modular verification technique in which the system of interest is described by means of a general automata-based formalism suitable for qualitative description of a large class of biological systems, called sync-programs, which supports modular construction \cite{DrabikENTCS10,DrabikSACS11}. Sync-programs include a notion of synchronization that enables the modelling of biological systems. The modular verification technique is based on property preservation and allows the verification of properties expressed in the temporal logic \ACTL to be verified on fragments of models. In order to handle modelling and verification of more realistic biological scenarios, we have proposed a dynamic version of our formalism along with an extension of the modular verification framework \cite{DrabikNCMA10}.

The long-term aim of our research is the development of an efficient modular verification framework specifically designed for biochemical pathways, and of a pathway analysis tool based on such a framework. Presently, we are at the first stages of the development of the modular verification framework. However, we already faced some problems whose solution required the definition of concepts related to the formal modelling of biochemical pathways and that we believe could be interesting not only in the context of modular verification. In particular, we defined a notion of \emph{fairness} for biochemical pathways and a notion of \emph{molecular component} of a pathway. The former is a well-known concept in concurrency theory that could be useful to describe more accurately the dynamics of a pathway (in a qualitative framework). The latter is a notion relating species involved in the same pathway such that two species are considered to be part of the same molecular component if they can be seen as different 
states of the same 
molecule. As far as we know, the adoption of a notion of fairness in the context of biology is new. On the other hand, the notion of molecular component has been often implicitly used (for instance in the modelling of biological systems by means of automata), but now we provide new insight on this notion.

In this paper we report preliminary results obtained during the development of the modular verification framework. Modular verification requires either adopting a modular notation for pathway modelling or finding a way to decompose a pathway, simply expressed as a set of biochemical reactions, into a number of modules. The approach that we choose to follow is in between these two alternatives. Actually, we assume the pathway to be expressed as a set of reactions satisfying some modularisation requirements, and then we define a modularisation procedure that allows modules to be inferred from reactions. Modules will be molecular components, hence our modularisation procedure will allow us to consider a pathway not only as a set of reactions, but also as a set of entities interacting with each other (through reactions) and consequently changing state.

Once the molecular components of a pathway are identified, we can use them to decompose the verification of a global pathway property into the verification of a number of sub-properties related with groups of components. To this aim we define a \emph{projection operation} that allows a model fragment describing the behaviour of a group of components to be obtained from a model describing the whole pathway. Such a projection operation is actually an abstraction function, since the behaviour of the group of components will be over-approximated (\ie the model will include behaviours that are not present in the model of the whole pathway). By considering a suitable temporal logic for the specification of properties (namely \ACTL, a fragment of the CTL logic consisting only of universally quantified formulae) we can prove that properties holding in model fragments obtained by projection also hold in the complete model of the pathway. Nothing can be said, instead, of properties that do not hold in the the model 
fragment.

In order to verify properties of complete pathway models or of model fragments it is possible to translate them into the input language of an existing model checking tool. Specifically, we use the NuSMV model checker \cite{CAV02}, which is a well-established and efficient instrument.

We demonstrate the modular verification approach on the model of the EGF receptor-induced MAP kinase cascade by Schoeberl \etal \cite{egf-reduced} and we discuss how we plan to continue the development of the approach to improve its efficiency.

\section{Modelling Biochemical Pathways with a Notion of Fairness}

In biochemistry, metabolic pathways are networks of biochemical reactions occurring within a cell. The reactions are connected by their intermediates: products of one reaction are substrates for subsequent reactions. Reactions are influenced by catalysts and inhibitors, which are molecules (proteins) which can stimulate and block the occurrence of reactions, respectively. For the sake of simplicity we do not consider inhibitors in this paper, although they could be easily dealt with.

\subsection{Syntax and semantics of the modelling notation}
\label{sec:syntax}

Given an infinite set of species $S$, let us assume biochemical reactions constituting a pathway to have the following form:
\[
 \reactionC{r_1,\ldots,r_n}{p_1,\ldots,p_{n'}}{c_1,\ldots,c_m}
\]
where $r_j,p_j$ and $c_j$, for suitable values of $j$, are all in $S$.
We have that $r_j$s are reactants, $p_j$s are products and $c_j$s are catalysts of the considered reaction.
Given a reaction $R$ we define $\re(R) = \{r_1,\ldots,r_n\}$, $\pro(R) = \{p_1,\ldots,p_{n'}\}$, and $\cat(R) = \{c_1,\ldots,c_m\}$. We denote the set of species involved in reaction $R$ as $\species(R) = \re(R) \cup \pro(R) \cup \cat(R)$.

A pathway $P$ is simply a set of reactions, $P=\{R_{1},\hdots,R_{N}\}$. Given a pathway $P$, we can infer the set of species involved in it as $\species(P) = \bigcup_{R\in P} \species(R)$.

The dynamics of a pathway can be described at several different levels of abstraction. The most precise level consists of a quantitative description in which quantities (or concentrations) of species are taken into account, as well as reaction rates in either a deterministic or a stochastic framework. At a more abstract level reaction rates can be ignored. Ultimately, also quantities of species can be ignored by considering only their presence (or absence) in the considered biochemical solution. The less abstract description level is obviously the most precise, but also the most difficult to treat with formal analysis techniques. The more abstract levels are more suitable for the application of formal analysis techniques and are often precise enough to provide some information on the role of the species and of the reactions involved in the pathway. We choose to adopt the most abstract description level, and hence we define a qualitative formal semantics of pathways in which species can only be either present 
or absent.

The dynamics of a pathway starts from an initial state representing a biochemical solution and is determined by the reactions. A reaction essentially causes the appearance of some new species in the biochemical solution. Actually, we choose to interpret the effect of a reaction depending on whether it is catalysed or not. In our interpretation a reaction without catalysts creates the products but does not consume the reactants. We choose this interpretation since non-catalysed reactions usually reach a steady-state of dynamic equilibrium in which both reactants and products are present in the biochemical solution. On the other hand, a reaction favoured by catalysts usually tends to be performed as long as there are reactants. Therefore, in our interpretation a reaction with catalysts creates the products and consumes the reactants. 
This choice implies that a reversible reaction in which both directions are catalysed, which frequently occurs in biological pathways, oscillates between two states. This is realistic in some cases (oscillatory behaviours) but not always. We leave a more detailed treatment of this aspect as future work.

Lastly, we assume that all of the catalysts are required to be present in order for the reaction to occur. Alternative combinations of catalysts that may enable the reaction should be modelled as different reactions having the same reactants and products.

Formally, given a pathway $P$ and a set $\state{0}\subseteq \species(P)$ representing species present in the initial state of the system, the \emph{semantics} of $P$ is given by the labelled transition system 
$(\powset{\species(P)},\state{0},\to_{R})$, where $\powset{\species(P)}$ is the powerset of the set of species of $P$, meaning that each state of the LTS is a configuration of the pathway indicating which species are present. We use the boldface notation, e.g.~\state{} to denote states in the semantics, while a simple $s$ denotes a species which means either a reactant, a product or a catalyst.
Furthermore, $\to_{R}:\powset{\species(P)}\times P\times\powset{\species(P)}$
is the least transition relation satisfying the following inference
rules
\begin{gather*}
\irule{
\re(R)\subseteq \state{},\ \pro(R)\not\subseteq \state{},\ \emptyset \neq \cat(R) \subseteq \state
}{
\trans{\state{}}{R}{(\state{} \setminus \re(R))\cup \pro(R)}
}(\mbox{cat})
\qquad
\irule{
\re(R)\subseteq \state{},\ \pro(R)\not\subseteq \state{},\ \cat(R)=\emptyset
}{
\trans{\state{}}{R}{\state{} \cup pro(R)}
}(\mbox{no-cat}).
\end{gather*}
Rules (cat) and (no-cat) formalise the dynamics of reactions in the presence and absence of catalysts, respectively. Both rules contain an assumption which states that the reaction does not occur if its products already exist. Note that thanks to this optimisation transitions that do not change the state of the system are excluded, which is convenient for the verification as the size of the transition system is smaller but the set of properties that hold stays the same.
We denote the semantic function as $\Sem$, \ie $\Sem:P\mapsto (\powset{\species(P)},\state{0},\to_{R})$.

A path in $\Sem(P)$ can be either a finite sequence $\state{0},R_0,\state{1},R_1,\hdots,\state{n}$
or an infinite sequence $\state{0},R_0,$ $\state{1},R_1,\hdots$ where for all $i$, $\state{i}$ is a state and $R_i$ is a reaction and $\trans{\state{i}}{R_i}{\state{i+1}}$ is a transition in $\Sem(P)$. The path consisting only of the initial state $\state{0}$ is denoted $\emptypath$.
In this paper we consider only maximal paths, corresponding to behaviours of the pathway in which as long as some reactions can occur, the pathway activity does not halt. It is worth noting that maximal paths are not necessarily infinite, as a state where no reactions can occur has no successor and a path leading to such a state is finite.

\subsection{Fairness}

In order to describe the behaviour of a pathway more accurately we consider a notion of fairness. We motivate it by considering a quantitative system consisting of four reactions $\reactionCRate{A}{B}{D}{k_{1}}$,   $\reactionCRate{B}{A}{D}{k_{2}}$,  $\reactionCRate{A}{C}{D}{k_{3}}$ and  $\reactionCRate{C}{A}{D}{k_{4}}$, where $k_{1}$, $k_{2}$, $k_{3}$ and $k_{4}$ are the reaction rates. By performing the qualitative abstraction, we get a pathway containing reactions $R_{1}=\reactionC{A}{B}{D}$ and $R_{2}=\reactionC{B}{A}{D}$, $R_{3}=\reactionC{A}{C}{D}$ and $R_{4}=\reactionC{C}{A}{D}$, whose semantics as defined above includes behaviours such as the one where $R_{3}$ never occurs.
Such a behaviour is a qualitative abstraction which is not correct, since the standard quantitative dynamics ruled by the law of mass action would imply that both $R_{1}$ and $R_{3}$ occur with a frequency proportional to their kinetic rates.
Actually, in a stochastic setting both $R_{1}$ and $R_{3}$ would infinitely occur with probability 1.
A correct qualitative abstraction of our system should therefore only include maximal paths in which both $R_{1}$ and $R_{3}$ occur infinitely many times.

A concept from concurrency theory that allows to specify the correct behaviour is fairness, which stipulates that reactions should compete in a fair manner.
We consider the well-known notion of strong fairness \cite{EL87}, also called compassion, which requires that if a reaction is enabled (ready to occur) infinitely many times, then it will occur infinitely many times.

Technically, fairness is specified by a linear temporal logic (LTL) formula.
LTL \cite{ltl} is built up from formulae over a finite set of atomic propositions $S$, therefore a $s\in S$ is a LTL formula and if $f$ and $g$ are LTL formulae, then so are $\neg f$, $f \lor g$, $X\ g$ and $f\ U\ g$ where $X$ is read as next and $U$ as until. Additional logical operators can be defined, $\true=s \lor \neg s$, $\false=\neg \true$, $f\land g=\neg(\neg f \lor \neg g)$ and $f\limpl g=\neg f \lor g$, additional temporal operators eventually $F\ g=\true\ U\ g$ and globally $G\ g=\neg F\ \neg g$. A LTL formula can be satisfied by a maximal path $\pi$ in $\Sem(P)$ as described by the satisfaction relation $\satLTL$: $\pi \satLTL s$ if $\pi=\state{},R,\pi'$, $\pi \satLTL \neg g$ if not $\pi \satLTL g$, $\pi \satLTL f \lor g$ if $\pi \satLTL f$ or $\pi \satLTL g$, $\pi \satLTL X\ g$ if $\pi=\state{},R,\pi'$ and $\pi' \satLTL g$, and finally $\pi \satLTL f\ U\ g$ if there is an $i\geq 0$ such that $\pi=\state{0},R_{0},\state{1},R_{1},\hdots$ 
and $\state{i},R_{i},\pi_{i}\satLTL f$ and forall $0\leq k<i$, $s_{k},R_{k},\pi_{k}\satLTL g$.

Fairness is expressed by formula $\fair$, and a maximal path $\pi$ is fair iff it satisfies $\fair$, \ie $\pi \satLTL \fair$.
We have 
$$\fair \iff \bigwedge_{R\in P} (\GF\ \enabled(R)\limpl \GF\ \occR)$$ 
where $\enabled(R) \iff ( (\bigwedge_{r \in \re(R)}r) \land (\bigvee_{p \in \pro(R)}\neg p) \land (\bigwedge_{c \in \cat(R)}c) )$ and the satisfaction of proposition $\occR$ is defined as $\pi' \satLTL \occR$ iff there is a path $\pi$ such that $\pi=\state{},R,\pi'$.

It should be noted that our fairness neither requires all reactions to occur infinitely nor requires fair paths to be infinite.

\subsection{Modelling the EGF receptor-induced MAP kinase cascade}

We apply our modular verification approach to a well-established computational model of the EGF signalling pathway. We consider the model of the MAP kinase cascade activated by surface and internalised EGF receptors, proposed by Schoeberl et al.~in \cite{egf-reduced}. This model includes a detailed description of the reactions that involve active EGF receptors and several effectors named GAP, ShC, SOS, Grb2, RasGDP/GTP and Raf. Moreover, the model describes the activity of internalised receptors, namely receptors that are no longer located on the cell membrane, but on a vesicle obtained by endocytosis and floating in the cytoplasm. Such internalised receptors continue to interact with effectors and to contribute to the pathway functioning, but actually the pathway can be seen as composed by two almost identical branches: the first consisting of the reactions stimulated by receptors on the cell membrane and the second consisting of reactions stimulated by internalised receptors.

\begin{figure} 
  \includegraphics[width=17cm]{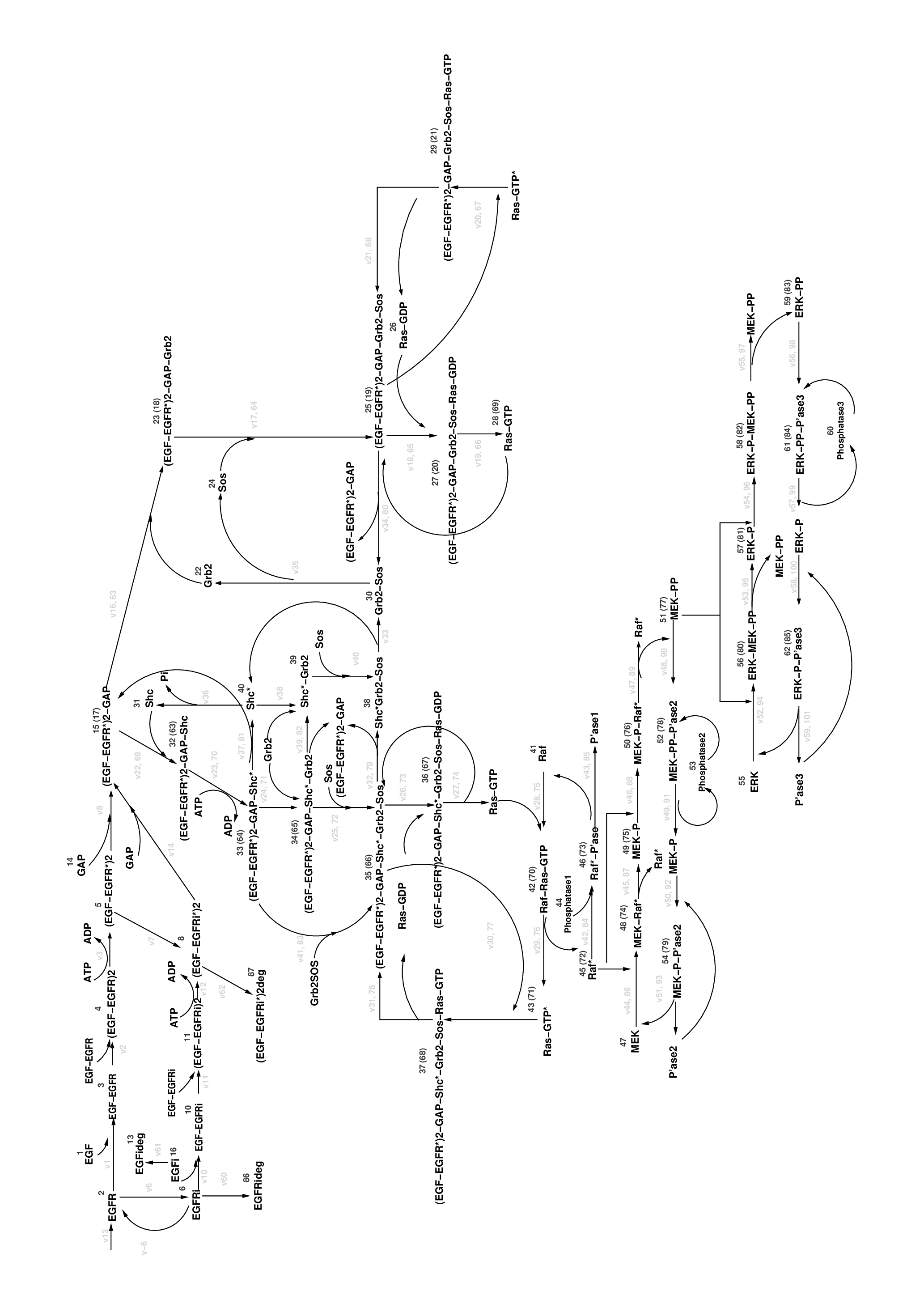}
  \caption{Scheme of the EGF receptor-induced MAP kinase cascade \cite{egf-reduced}}
  \vspace{5cm}
  \label{fig:egf-pathway}
\end{figure}

A diagram representing all of the reactions of the pathway considered in the model is shown in Figure \ref{fig:egf-pathway}. In the figure, species are identified by a short name, but also by a number (in black) in the interval $[1-60]$. Arrows represent reactions, which are also associated with an identifier (in grey) in the interval $[v0-v101]$. Note that the two branches of the pathway are partially combined in the figure. In particular, the representation of most of the species is combined with the representation of its internalised counterpart. In such cases, the number between brackets denotes the number identifying the internalised species. The same holds for reactions: in many cases an arrow denotes both a reaction stimulated by receptors in the cell membrane and the corresponding reaction stimulated by internalised receptors.

The set of reactions constituting the pathway can be trivially reconstructed from the diagram in Figure \ref{fig:egf-pathway}. The only non-trivial aspect is related with the presence in the diagram of some reactions in which one reactants is actually acting as a catalyst. For instance, this happens in the case of the reactions involving Raf$^*$ and MEK, in which Raf$^*$ initially binds MEK and then releases it phosphorylated. We describe these two reactions in the diagram with the following single catalysed reaction:
\[
  \reactionC{\speciesfont{MEK}}{\speciesfont{MEK\mhyphen P}}{\speciesfont{Raf^*}}
\]

Other species acting as catalysts are MEK-PP, Phosphatase1, Phosphatase2 and Phosphatase3. By applying the same transformation also to the reactions they are involved in we obtain a pathway constituted by 80 reactions. We call this pathway $\pathway$.

We recall that fairness requires that a reaction that is infinitely often enabled is also infinitely often performed. This prevents starvation situations to happen among reactions. In the case of $\pathway$ the two branches of the pathway include reactions that could be involved in infinite loops (e.g.~the reactions involving MEK and ERK). This means that the semantics of the pathway includes behaviours in which only one branch executes forever even if the other is constantly enabled. Such unrealistic behaviours are excluded by the adoption of fairness.

\section{Identification of Molecular Components}

In this section we argue, that under conditions often found in practice, a pathway can be decomposed into components, which, as it will be shown in the following sections, can be used for modular verification.

\subsection{Assumptions}
\label{sec:modver-assump}

Intuitively, a species can be seen as a part of  a ``state'' or ``configuration'' of a more general system component, and a reaction can be seen as a synchronised state change of a set of such system components. 
In order to view a pathway through this optics, it is convenient to assume that the pathway has equal number of reactants and products (which is not the case in general). Moreover, we assume a positional correspondence between the reactants and the products, in particular we assume that product $p_j$ is the result of the transformation of reactant $r_j$ by the reaction.
In our experience, it is usually possible to translate a reaction of a pathway into such a ``normal form''. Reactions of cellular pathways very often represent bindings (and unbindings) of well-defined macromolecules, such as proteins and genes, to form (or to break) complexes either with other macromolecules or with small molecules such as ions and nutrients. Also conformational changes are common, in which a protein (or a complex constituted by a few proteins) changes its own ``state''. If we consider a complex not as a single entity, but as a combination of macromolecules we have that all of the mentioned kinds of reaction do not change the number of (macro)molecules in the system. Hence, it should be possible to model them with the reactions in the form we assume here. For the moment we leave the translation of reaction into the assumed form to the modeller.

In Section \ref{sec:syntax} we have introduced the syntax of the modelling formalism of biochemical pathways, in which a reaction is allowed to have a different number of reactants and products.

\subsection{Components identification}

Let us, thus, assume that the pathway $P$ consists of reactions in the following form:
\[
 \reactionC{r_1,\ldots,r_n}{p_1,\ldots,p_n}{c_1,\ldots,c_m}.
\]
Such a form enables us to identify a set of components $I$ that constitute the pathway.
Now we present an algorithm that given a pathway $P$ returns the set of components $I$ along with the partition of the set of species belonging to respective components.

We illustrate the intuitive idea on an example. Each reaction can be seen as a synchronisation of components. For example reaction $\reactionC{r_1,r_2}{p_1,p_2}{c}$ can be interpreted as a synchronisation of three components: 
one that changes its state from a state where $r_{1}$ holds into a state where $p_{1}$ is present and $r_{1}$ is not,
another component that changes its state from a state where $r_{2}$ holds to a state where $p_{2}$ is present and $r_{2}$ is not,
and a component which participates passively and stays in a state where $c$ is present.
Since we suppose that only one reaction takes place at a time in the whole system, the states of all the components do not change other than those involved in the reaction in the way we described.
From the example we can see that species $r_{1}$ and $p_{1}$ belong to the same component. Similarly $r_{2}$ belongs to the component that contains $p_{2}$, while $c$ is from a separate one.

\myskip

The algorithm follows. We start by assuming that each species belongs to a different component and we refine this assumption by iterating over the reactions constituting $P$. The result of the algorithm is a mapping $map$ assigning each species to its component.

\begin{algorithm}
\begin{algorithmic}
\STATE Let $map: S \mapsto I$ be an injective mapping
\FORALL{$R$ in $P$}
  \FORALL{$r_j$ in $\re(R)$}
  \STATE $map := \begin{cases}
              p \mapsto map(r_{j}) & \forall p \in \set{s\in S}{map(s)=map(p_j)}\\
              s \mapsto map(s) & \text{otherwise}
  \end{cases}$
  \ENDFOR
\ENDFOR
\RETURN $map$
\end{algorithmic}
\caption{Algorithm to partition species into different components}\label{alg:map}
\end{algorithm}

The algorithm updates the mapping by unifying the elements assigned to reactants and products in the same position in a reaction, and this is done for all reactions in the pathway.

The set of components $\comp(P)=I$ of pathway $P$ is the image of mapping $map$. Components of a reaction $R$ denoted, using the same notation, as $\comp(R)$ are defined as $\comp(\{R\})$.

\subsection{Initial state}

We adopt a semi-automatic heuristic procedure to find an initial state of the pathway. The idea is the following: for each species $s$ in $\species(P)$, if there is no reaction creating it (i.e.~if $s\not\in \bigcup_{R\in P} \pro(R)$) then in the initial state $s$ is present. This means that species that cannot be produced are assumed to be present in the initial state. Otherwise their presence in the model would not be meaningful. Subsequently, we resort again to the partitioning of species according to components to find other species to be inserted. In particular, we find those components containing no species present in the previous phase. These components must contain loops, hence we choose manually some of their species to insert. All other species are assumed absent.

\subsection{Visualisation of component interaction}

A component interaction graph can be drawn which visualises the components of a pathway and their interactions.
It is a directed graph in which vertices are system components (elements of $I$) and edges connect components that are involved together in a reaction. If two components are both involved as reactants (and consequently products), the edge connecting them will not be oriented (displayed as bidirectional). If one of the two is involved as reactant and the other as catalyst, then the edge will start from the vertex representing the latter to the vertex representing the former. There is no edge between vertices representing components involved in the same reactions only as catalysts.

\subsection{The model}

The model \pathway is made up of 143 species and 80 reactions.
It is in the correct form assumed in Section~\ref{sec:modver-assump} and no preprocessing is needed.
After performing the components identification procedure, 14 components are identified.
On Figure \ref{fig:intgraph} we can see the component interaction graph of $\pathway$.
Each node of the graph is labelled by the intuitive name of the component that we have chosen.

Visually, we can do some simple observations on the component interaction graph.
We can identify enzymes like Phosphatase1, Phosphatase2 and Phosphatase3. We can see the first part of the pathway corresponding to the EGF receptor and its interaction with effectors, and its connection to the MAP kinase cascade through the component RasGDP.

\begin{figure}[h]
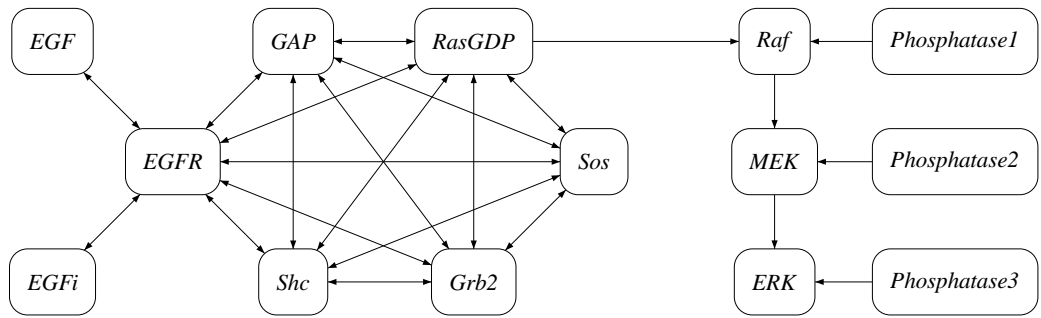

\begin{center}
\scalebox{\picscale}{
\begin{gpicture}(150,57)(0,-3)
\gasset{Nadjust=w,Nadjustdist=2,Nh=11,Nmr=3}

\node(A)(0,40){
  $\compfont{EGF}$ 
}
\node(B)(0,0){
  $\compfont{EGFi}$ 
}
\node(C)(20,20){
  $\compfont{EGFR}$ 
}
\node(D)(40,40){
  $\compfont{GAP}$ 
}
\node(E)(40,0){
  $\compfont{Shc}$ 
}
\node(F)(70,40){
  $\compfont{RasGDP}$ 
}
\node(G)(70,0){
  $\compfont{Grb2}$ 
}
\node(H)(90,20){
  $\compfont{Sos}$ 
}
\node(I)(150,40){
  $\compfont{Phosphatase1}$ 
}
\node(J)(120,40){
  $\compfont{Raf}$ 
}
\node(K)(120,20){
  $\compfont{MEK}$ 
}
\node(L)(120,0){
  $\compfont{ERK}$ 
}
\node(M)(150,20){
  $\compfont{Phosphatase2}$ 
}
\node(N)(150,0){
  $\compfont{Phosphatase3}$ 
}
\gasset{AHnb=1,ATnb=1}
\drawedge(A,C){}
\drawedge(B,C){}
\drawedge(C,D){}
\drawedge(C,E){}
\drawedge(C,F){}
\drawedge(C,G){}
\drawedge(C,H){}
\drawedge(D,E){}
\drawedge(D,F){}
\drawedge(D,G){}
\drawedge(D,H){}
\drawedge(E,F){}
\drawedge(E,G){}
\drawedge(E,H){}
\drawedge(F,G){}
\drawedge(F,H){}
\drawedge(G,H){}
\gasset{AHnb=1,ATnb=0}
\drawedge(F,J){}
\drawedge(I,J){}
\drawedge(J,K){}
\drawedge(K,L){}
\drawedge(M,K){}
\drawedge(N,L){}
\end{gpicture}
}
\end{center}
\caption{Component interaction graph of \pathway} 
\label{fig:intgraph}
\end{figure}

\section{Modular Verification}

In this section we define a modular verification technique for pathway models. 
We proceed by defining the projection of a pathway with help of the identified components.
Such a projection can be seen as an abstraction, giving rise to abstract pathways. 
We prove that a successful verification of a property in the abstraction implies its truth in the original model.

\subsection{Abstract pathways: syntax, semantics and fairness}

We are interested in analysing only a portion of the entire pathway, in particular a portion induced by only a subset of all components.
Let $I=\comp(P)$ and $J\subseteq I$, we define the projection of a pathway $P$ onto $J$ as an abstract pathway $P \proj J$.

We will need an extension of function \species, abusing the notation, which operates on a component set:  $\species(J)=\set{s\in \species(R)}{R \text{ s.t.~} \comp(R)\cap J\neq \emptyset}$.

\begin{definition}
 An abstract pathway $P\proj J$ is a pair $(PR,AR)$, where 
\begin{itemize}
 \item $PR=\set{R\in P}{\comp(R)\subseteq J}$
 \item $AR=\bigcup_{\substack{R\in P,\ \comp(R)\cap J\neq\emptyset\\  \comp(R)\cap(I\setminus J)\neq\emptyset}}
 \{\reactionC{\re(R)\proj J}{\pro(R)\proj J}{\cat(R)\proj J},\ \reactionC{\re(R)\proj J}{\re(R)\proj J}{\cat(R)\proj J}\}$
\end{itemize}
where the projection of a set of species $u\subseteq S$ is defined as  $u\proj J=\set{s\in u}{s\in \species(J)}$.
\end{definition}

An abstract pathway consists of two sets of of reactions: $PR$ contains reactions which influence only components inside $J$, and $AR$ contains projections of reactions that influence components both inside and outside $J$. Reactions in $PR$ are exactly as in $P$, since all of the species involved in such reactions are considered in the abstract pathway. On the other hand, reactions in $AR$ are obtained from the reactions in $P$ which involve species both from some components in $J$ and from some components not in $J$. For each of such reactions in $P$ we have two reactions in $AP$: one describing the situation in which species not in $species(J)$ are assumed to be configured such that the reaction can occur, and the other describing the opposite situation. In the first case the reaction in $AR$ produces some products; in the second case the reaction in $AR$ performs a self-loop (i.e. it does not change the state).

The abstract pathway semantics is defined as $\Semabs: P\proj J \mapsto \Sem(PR \cup AR)$, that is by using the standard semantics \Sem on the projected reactions in both $PR$ and $AR$.

What changes with respect to the pathway model is the definition of fairness. In fact, in this case fairness constraints can be applied only to reactions in $PR$ since reactions in $AR$ consist of pairs of reactions always applicable at the same time and in which it is reasonable to assume that one of the two is always preferred (describing the situation in which the corresponding reaction in $R$ is always enabled or disabled).
We define the notion of abstract fairness as 
$$\fairabs \iff \bigwedge_{R\in PR} (\GF\  enabled(R)\limpl \GF\ \occR).$$
Here the compassion is only required for reactions from $PR$.

Note that a pathway is a special case of abstract pathway, since the semantics of $P$ is equivalent (isomophic) to that of $P\proj I$ and in this case also $\fair \equiv \fairabs$ holds.

\subsection{Logic for specifying properties}

Properties of pathways are specified in temporal logic with species as atomic propositions.

The logic we consider is a fragment of the Computation Tree Logic CTL. Following Attie and Emerson \cite{AE98}, we assume the logic \ACTL for specification of properties. ACTL is the ``universal fragment'' of CTL which results from CTL by restricting negation to propositions and eliminating the existential path quantifier and \ACTL is ACTL without the AX modality.

\begin{definition}
The \emph{syntax of \ACTL} is defined inductively as follows:
\begin{itemize}
\item The constants $\true$ and $\false$ are formulae. $s$ and $\non s$ are formulae for any atomic proposition $s$, where the set of atomic propositions $AP$ are the set of all species $S$. 
\item If $f, g$ are formulae, then so are $f \land g$ and $f \lor g$.
\item If $f, g$ are formulae, then so are $\AU{f}{g}$ and $\AUw{f}{g}$.
\end{itemize}
\end{definition}

We define the logic ACTL$^{-}_{J}$ to be \ACTL where the atomic propositions are drawn from $\AP_{J}=\species(J)$. Abbreviations in \ACTL: $\mi{AF f} \equiv \AU{\true}{f}$ and $\mi{AG\ f} \equiv \AUw{f}{\false}$.


\myskip

Properties expressible by ACTL$^{-}$ formulae represent a significant class of properties investigated in the systems biology literature as identified in \cite{Monteiro:2008ft}, such as properties concerning exclusion (\emph{``It is not possible for a state $\state{}$ to occur''}), necessary consequence (\emph{``If a state $\state{1}$ occurs, then it is necessarily followed by a state $\state{2}$}''), and necessary persistence (\emph{``A state $\state{}$ must persist indefinitely''}).

On the other hand, properties as occurrence, possible consequence, sequence and possible persistence are of inherently existential nature, and are not expressible in ACTL$^{-}$.

\myskip

Definition of the semantics of \ACTL formulae on labelled transition system $\Sem(P)$ follows. Note that only fair maximal paths are considered.

\begin{definition}\emph{Semantics of \ACTL}. We define $\Sem(P),\state{} \satfair f$ (resp. $\Sem(P),\pi\satfair f$) meaning that $f$ is true in structure $\Sem(P)$ at state $\state{}$ (resp.~fair maximal path $\pi$). We define $\satfair$ inductively:

\begin{itemize}
\item $\Sem(P),\state{} \satfair \true$. $\Sem(P),\state{} \notsat \false$. $\Sem(P),\state{} \satfair s \textrm{ iff } \state{}(s)=\mi{tt}$. $\Sem(P),\state{} \satfair \neg s \textrm{ iff } \state{}(s)=\mi{ff}$.
\item $\Sem(P),\state{} \satfair f \land g \textrm{ iff } \Sem(P),\state{} \satfair f \textrm{ and } \Sem(P),\state{} \satfair g$.\\ $\Sem(P),\state{} \satfair f \lor g \textrm{ iff } \Sem(P),\state{} \satfair f \textrm{ or } \Sem(P),\state{} \satfair g$.
\item $\Sem(P),\state{} \satfair Af \textrm{ iff for every fair maximal path } \pi=(\state{},R,\hdots) \textrm{ in } \Sem(P): \Sem(P),\pi\satfair f$.
\item $\Sem(P),\pi \satfair f \textrm{ iff } \Sem(P),\state{} \satfair f$, where $\state{}$ is the first state of $\pi$
\item $\Sem(P),\pi \satfair f \land g \textrm{ iff } \Sem(P),\pi\satfair f \textrm{ and } \Sem(P),\pi\satfair g$.\\ $\Sem(P),\pi \satfair f \lor g \textrm{ iff } \Sem(P),\pi\satfair f \textrm{ or } \Sem(P),\pi\satfair g$.
\item $\Sem(P),\pi \satfair f\ U\ g \text{ iff } \pi = (\state{0},R_0,\state{1},R_1,\hdots) 
\text{ and there is } m \in\Nat \textrm{ such that } \Sem(P),\state{m} \satfair g \\ \textrm{ and for all } m'<m: \Sem(P),\state{m'}\satfair f$.
\item $\Sem(P),\pi \satfair f\ U_{w}\ g \text{ iff } \pi = (\state{0},R_0,\state{1},R_1,\hdots) 
\text{ and for all } m \in \Nat, \textrm{ if } \Sem(P),\state{m'}\notsat g \\ \textrm{ for all } m'<m \textrm{ then } \Sem(P),\state{m}\satfair f$.
\end{itemize}
\end{definition}

We assume $\satfairabs$ to be defined as $\satfair$, but with abstract fairness $\fairabs$ replacing $\fair$.

\subsection{Modular verification theorems}


Now we prove that in order to verify an $\ACTL_{J}$ property for a pathway $P$, it is enough to verify the same property in the abstract semantics of the abstract pathway $P\proj J$. The principle behind property preservation is that each path in the semantics of the modelled pathway must have a corresponding abstract path in the abstract semantics of a model obtained by projection. This, combined with the fact that \ACTL properties are universally quantified (namely describe properties that have to be satisfied by all paths) ensure that if an \ACTL property holds in the abstract semantics of the projection, then it will also hold in the semantics of the orginal model. In fact, for the components considered in a projection the semantics of the original model will contain essentially a subset of the paths of the projected model.

\ACTL properties are universally quantified, namely they they deal with all paths starting form a given initial state, and the fact that all original paths (more precisely their projections) are included amongst the paths of the projection. Thus if one proves that the property holds in the projection for all paths it will hold for all paths also in the original system.

First we define the path projection, which from a path in semantics of a pathway with the set of components $I$ removes transitions made by components outside of portion $J\subseteq I$ and restricts the rest of transitions onto $J$.

\begin{definition}
  \[
  \pi \sproj J = 
  \begin{cases}
    \emptypath & \text{ if } \pi=\emptypath\\
    \pi'\sproj J & \text{ if } \pi=\state{},R,\pi' \text{ and } \comp(R)\cap J = \emptyset\\
    \state{}\sproj J,R,\pi'\sproj J & \text{ if } \pi=\state{},R,\pi' \text{ and } \comp(R)\cap J \neq \emptyset\\
  \end{cases}
  \]
\end{definition}

Follows the infinite path projection which ensures that the resulting traces are infinite. In case of a finite original trace it adds an infinite looping in the final state.

\begin{definition}
Given $\pi\sproj J$ with initial state $\state{0}$ we define $\pi\sprojinf J = \pi\sproj J$ if $\pi\sproj J$ is infinite, otherwise if 
$\pi \sproj J = \state{0},R_{0},\hdots,\state{n-1},R_{n-1},\state{n}$ we define
$\pi\sprojinf J = \pi \sproj J,(*,\state{n})^{\infty}$, 
where $(R,\state{})^{\infty}=R,\state{},(R,\state{})^{\infty}$.
We denote $\emptypath\sprojinf J=\state{0},(*,\state{0})^{\infty}$ as $\emptypathinf$.
\end{definition}

Now we are in the position to present the crucial result, which states that a fair maximal path in the semantics of a pathway is either projected or infinitely projected into an abstractly fair maximal path in the abstract semantics of an abstract pathway. It is split in two lemmas, where the first one states that at least one of the projections is present as a maximal path in the abstract semantics. The second lemma proves the abstract fairness of the projections.

\begin{lemma}
\label{lem:pathproj}
$\pi \in \Sem(P)$ with $\pi\satLTL \fair$ implies $(\pi \sproj J \in \Semabs(P\proj J)$ or $\pi \sprojinf  J \in \Semabs(P\proj J))$.
\end{lemma}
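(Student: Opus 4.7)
The plan is to walk through $\pi = \state{0}, R_0, \state{1}, R_1, \ldots$ step by step and, for each transition of the original path, exhibit a matching transition in $\Semabs(P\proj J) = \Sem(PR \cup AR)$. Each reaction $R_i$ used in $\pi$ falls into one of three cases based on how $\comp(R_i)$ relates to $J$. If $\comp(R_i) \cap J = \emptyset$, then no species of $R_i$ lies in $\species(J)$, hence $\state{i} \sproj J = \state{i+1} \sproj J$ and the step is simply discarded by the second clause of the definition of $\sproj$. If $\comp(R_i) \subseteq J$, then $R_i \in PR$ and every species of $R_i$ is in $\species(J)$, so the inference rule ((cat) or (no-cat)) that justified $\trans{\state{i}}{R_i}{\state{i+1}}$ applies verbatim to the projected states, yielding $\trans{\state{i}\sproj J}{R_i}{\state{i+1}\sproj J}$. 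If $\comp(R_i)$ straddles $J$ and $I\setminus J$, then $AR$ supplies both the productive projection $\reactionC{\re(R_i)\proj J}{\pro(R_i)\proj J}{\cat(R_i)\proj J}$ and its self-loop twin, at least one of which must be shown to fire correctly from $\state{i}\sproj J$ to $\state{i+1}\sproj J$.

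Concatenating the surviving transitions produces $\pi \sproj J$ as a path in $\Sem(PR \cup AR)$ starting from $\state{0} \sproj J$. The membership claim is then completed by a case split on maximality: if $\pi \sproj J$ is infinite, or if it is finite and its last state has no reaction of $PR \cup AR$ enabled, then $\pi \sproj J$ is itself a maximal path and belongs to $\Semabs(P\proj J)$; otherwise the padded version $\pi \sprojinf J$, whose $(*, \state{n})^\infty$ tail is to be read as iterated stuttering through self-loop reactions in $AR$ that remain enabled forever at the final projected state, supplies the required infinite maximal path. Note that the hypothesis $\pi \satLTL \fair$ is not strictly needed for this lemma (it will be exploited in the companion lemma on abstract fairness); here we rely only on the fact that $\pi$ is a maximal path of $\Sem(P)$.

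The main obstacle lies in the straddling case. The rules (cat) and (no-cat) carry the optimisation precondition $\pro(R) \not\subseteq \state{}$, and it can easily happen that the projected products $\pro(R_i)\proj J$ are already present in $\state{i}\sproj J$ even when the full $\pro(R_i)$ is not present in $\state{i}$, which blocks the productive variant in $AR$. The dual self-loop reaction in $AR$ is engineered precisely to rescue this situation, but matching the two variants against the effect that the projected transition is required to have -- distinguishing in particular whether reactants are consumed (cat) or merely left in place (no-cat), and tracking which of $\re(R_i), \pro(R_i), \cat(R_i)$ project onto $\species(J)$ and which vanish -- requires a careful case analysis. Showing that the resulting candidate transition always coincides with $\trans{\state{i}\sproj J}{\cdot}{\state{i+1}\sproj J}$ is the technical heart of the argument.
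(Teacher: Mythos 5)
Your transition-by-transition matching and the three-way case split on $\comp(R_i)$ versus $J$ is essentially the paper's induction, and you rightly flag the subtlety that the precondition $\pro(R)\not\subseteq\state{}$ need not survive projection in the straddling case. But there is a genuine gap: your claim that the hypothesis $\pi\satLTL\fair$ is not needed is wrong, and the place where it is needed is exactly the case your second paragraph disposes of too quickly, namely $\pi$ infinite with $\pi\sproj J$ finite. Membership in $\Semabs(P\proj J)$ means being a \emph{maximal} path, so for a finite $\pi\sproj J$ you must either show that no reaction of $PR\cup AR$ is enabled at its last state, or realise the $(*,\state{n})^{\infty}$ padding by a self-loop from $AR$ that is actually enabled there. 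Neither is guaranteed for an unfair $\pi$. Concretely, take $P=\{R_1,R_2,R_3,R_4\}$ with $R_1=\reactionC{A}{B}{D}$, $R_2=\reactionC{B}{A}{D}$, $R_3=\reactionC{C}{E}{F}$, $R_4=\reactionC{E}{C}{F}$, initial state $\{A,C,D,F\}$, and let $J$ consist of the components $\{C,E\}$ and $\{F\}$. The unfair infinite path $\pi$ that alternates $R_1,R_2$ forever is maximal in $\Sem(P)$, and $\pi\sproj J$ is the single state $\{C,F\}$. There $R_3\in PR$ is enabled, so $\pi\sproj J$ is not maximal; and since no reaction of $P$ straddles $J$ and $I\setminus J$ we have $AR=\emptyset$, so there is no self-loop with which to realise $\pi\sprojinf J$ either. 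The conclusion of the lemma fails for this $\pi$, so no proof that ignores fairness can succeed.

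The paper closes this case as follows: if some $R\in PR$ were enabled at the last state of $\pi\sproj J$, then (since $\comp(R)\subseteq J$ and the discarded tail of $\pi$ no longer changes the relevant species) $R$ would be enabled at infinitely many states of $\pi$; by $\pi\satLTL\fair$ it would then occur infinitely often in $\pi$, and each such occurrence survives projection, contradicting the finiteness of $\pi\sproj J$. Hence the last state of $\pi\sproj J$ enables no reaction of $PR$ and the projected path is maximal. You need to restore the fairness hypothesis and insert this argument. Separately, note that your closing paragraph only \emph{describes} the case analysis for the straddling reactions as ``the technical heart'' without carrying it out; that analysis (deciding which of the two $AR$ variants fires and checking that its effect on $\state{i}\sproj J$ equals $\state{i+1}\sproj J$) still has to be written before the proof is complete.
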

\begin{proof}
Let us assume that $\pi$ is a finite path, then we can prove that either $(\pi \sproj J \in \Semabs(P\proj J)$ or $\pi \sprojinf  J \in \Semabs(P\proj J))$ by induction on the path length.


Case $\pi = \emptypath$. We have that $\pi \sproj J = \pi \sprojinf J = \emptypath$.
Since $\pi = \emptypath$, the initial state $\state{0}$ is such that no reaction is enabled in $\state{0}$. By the definition of abstract pathway semantics we know that the initial state of $\Semabs(P\proj J)$ is $\state{0}\sproj J$.

By definition of abstract pathway we have that $P\proj J = (PR,AR)$. 
\begin{itemize}
 \item If $AR = \emptyset$, then since $PR \subseteq P$ and there is no reaction in $P$ that is enabled in $\state{0}$, there is also no reaction in $PR$ that is enabled in $\state{0}\sproj J$, and hence $\emptypath \in \Semabs(P\proj J)$.
 \item If $AR \neq \emptyset$, then for each $R \in P \setminus PR$ we have in $AR$ two reactions $R_1,R_2$ as follows
  \begin{center}
 $R_{1}= \reactionC{\re(R)\proj J}{\pro(R)\proj J}{\cat(R)\proj J}$\\
 $R_{2}= \reactionC{\re(R)\proj J}{\re(R)\proj J}{\cat(R)\proj J}$
 \end{center}
Note that if $R_{1}$ is enabled, then also $R_{2}$ is enabled. 

As before, since there is no reaction in $P$ that is enabled in $\state{0}$, there is also no reaction in $PR$ that is enabled in $\state{0}\sproj J$.
If there is some $R_{2}$ enabled in $\state{0}\sproj J$, then $(R_{2},\state{0}\sproj J)^\infty \in  \Semabs(P\proj J)$, that is $\emptypath \sprojinf J \in \Semabs(P\proj J))$. On the other hand, if there is no $R_{2}$ enabled in $\state{0}\sproj J$, then there is also no $R_{1}$ enabled in the same state, hence $\emptypath \in \Semabs(P\proj J)$, that is $\emptypath \sproj J \in \Semabs(P\proj J)$.
\end{itemize}
%
%
%
%

Case $\pi = \state{},R,\pi'$. 
We distinguish two subcases:
\begin{itemize}
 \item $\comp(R)\cap J = \emptyset$: $\state{}'$ is the initial state of $\pi'$, so by induction hypothesis $\state{}'\sproj J$ is the initial state of either $\pi'\sproj J$ or $\pi' \sprojinf J$. Moreover, state, by definition of path projection, $\state{}\sproj J=\state{}'\sproj J$ then $\state{}\sproj J$ is the initial state of either $\pi'\sproj J$ or $\pi'\sprojinf J$, which means that either $\pi'\sproj J = \pi\sproj J \in \Semabs(P\proj J)$ or $\pi'\sprojinf J = \pi\sprojinf J \in \Semabs(P\proj J)$.

 \item $\comp(R)\cap J \neq \emptyset$: $\state{}'$ is the initial state of $\pi'$, so by induction hypothesis $\state{}'\sproj J$ is the initial state of either $\pi'\sproj J$ or $\pi' \sprojinf J$. Moreover, in $P\proj J$ there are $R_{1}$ and $R_{2}$ as above. Since $R$ is enabled in $\state{}$, $R_{1}$ is enabled in $\state{}\sproj J$. Therefore either $\state{}\sproj J, R_{1}, \pi'\sproj J = \pi\sproj J \in \Semabs(P\proj J)$ or $\state{}\sproj J, R_{1}, \pi'\sprojinf J = \pi\sprojinf J \in \Semabs(P\proj J)$.
 \end{itemize}

Summarising, if $\pi$ is a finite path, in all the possible cases we have that either $\pi \sproj J \in \Semabs(P\proj J)$ or $\pi \sprojinf  J \in \Semabs(P\proj J)$. 

Let us consider now the case in which $\pi$ is infinite. We have two subcases:
\begin{itemize} 
 \item  $\pi \sproj J$ is finite: By the fairness assumption $\pi\satLTL \fair$ it follows that if the path $\pi \sproj J$ is finite, then its final state is such that no reaction in $PR$ is enabled. In fact, as a consequence of fairness, the only case in which $\pi \sproj J$ can be finite is when all of the reactions that are infinitely often enabled in $\pi$ are performed only by components that are not in $J$. Hence, $\pi \sproj J$ is finite since it contains reactions that are not enabled infinitely often. Let $\pi = \pi_1,\pi_2$, where $\pi_1$ is the shortest (finite) prefix such that $\pi_2$ contains only moves of reactions enabled infinitely often. It is easy to see that $\pi \sproj J = \pi_1 \sproj J$ and similarly as before we can prove by induction on the length of $\pi_1$ that either $\pi_1 \sproj J \in \Semabs(P\proj J)$ or $\pi_1 \sprojinf  J \in \Semabs(P\proj J)$. The fact that no reaction is enabled in the final state of $\pi \sproj J$ ensures that $\pi \sproj J$ is a maximal path of $\Semabs(
P\proj J)$.
 \item  $\pi \sproj J$ is infinite: This case can be proved by showing that the relation between $\pi$ and $\pi \sproj J$ described in the inductive case of the proof for a finite $\pi$ is indeed an invariant property in the case of an infinite path $\pi$.
\end{itemize}
\end{proof}

Now we state and prove the second lemma.

\begin{lemma}
\label{lem:fairproj}
$\pi\satLTL \fair$ implies $\pi\sproj J \satLTL \fairabs$ and $\pi\sprojinf J \satLTL \fairabs$.
\end{lemma}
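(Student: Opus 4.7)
The plan is to fix an arbitrary reaction $R \in PR$ and verify that the compassion clause $\GF\ \enabled(R) \limpl \GF\ \occR$ holds along both $\pi \sproj J$ and $\pi \sprojinf J$. The leverage comes from the fact that $R \in PR$ means $\comp(R) \subseteq J$, which entails $\species(R) \subseteq \species(J)$. From this I would extract two auxiliary observations: (i) for every state $\state{}$ appearing in $\pi$, reaction $R$ is enabled at $\state{}$ if and only if it is enabled at $\state{}\sproj J$, since enabledness depends only on species in $\species(R)$, which the projection preserves; and (ii) every transition labelled $R$ in $\pi$ survives the projection as a transition labelled $R$ in $\pi\sproj J$, because $\comp(R)\cap J = \comp(R) \neq \emptyset$.

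Next I would split on whether $\pi\sproj J$ is infinite. In the infinite case, $\pi\sprojinf J = \pi\sproj J$, and the argument is direct: infinite enabledness of $R$ along $\pi\sproj J$ implies, by (i), infinite enabledness along $\pi$; the hypothesis $\pi \satLTL \fair$ then yields infinite occurrence of $R$ in $\pi$, and by (ii) this transports to infinite occurrence in $\pi\sproj J$, discharging the compassion implication.

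The delicate case is when $\pi\sproj J$ is finite, say $\pi\sproj J = \state{0}\sproj J, R_0, \ldots, \state{n}\sproj J$. Here I would reuse the structural observation already established inside the proof of Lemma~\ref{lem:pathproj}: when $\pi$ is infinite yet its projection is finite, the tail of $\pi$ after the last $J$-touching transition contains only reactions with $\comp\cap J = \emptyset$, which leave the $\species(J)$-part of the state unchanged. If some $R' \in PR$ were enabled at $\state{n}\sproj J$, observation (i) would keep $R'$ enabled at every state of that tail, and fairness of $\pi$ would then force $R'$ to occur infinitely often in $\pi$; but each such occurrence involves components in $J$, contradicting finiteness of $\pi\sproj J$. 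Hence no reaction of $PR$ is enabled at $\state{n}\sproj J$. Consequently, in both $\pi\sproj J$ and $\pi\sprojinf J = \pi\sproj J, (*, \state{n}\sproj J)^{\infty}$, each $R \in PR$ is enabled at only finitely many positions, so $\GF\ \enabled(R)$ is false and the implication holds vacuously.

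The main obstacle I anticipate is precisely this finite case: justifying rigorously that no $PR$-reaction can be enabled at the terminal state of a finite projection. The remainder is routine bookkeeping showing that enabledness and occurrence of reactions in $PR$ commute with the projection operator. I would therefore present the delicate step as a direct appeal to the structural fact already unpacked in the case analysis of Lemma~\ref{lem:pathproj}, rather than reproving it from scratch.
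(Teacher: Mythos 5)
Your argument is correct and rests on the same two transfer facts that the paper's proof records: for $R\in PR$ one has $\comp(R)\subseteq J$, hence $\species(R)\subseteq\species(J)$, so enabledness of $R$ is invariant under state projection and every $R$-labelled transition survives path projection. The difference is one of completeness rather than of route. First, you state the enabledness transfer as an equivalence and use the direction from $\state{}\sproj J$ back to $\state{}$, which is the direction actually needed to import $\GF\,\enabled(R)$ into $\pi$ and invoke $\fair$; the paper records only the forward implication, which alone would not suffice. Second, and more substantially, you isolate the case in which $\pi\sproj J$ is finite, which the paper's proof does not address at all: there the compassion clauses can only hold vacuously, so one must show that no reaction of $PR$ is enabled in the terminal state of the projection, both for $\pi\sproj J$ and for the stuttering extension $\pi\sprojinf J$ (where the looping label $*$ never satisfies $\occR$). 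Your contradiction argument for this--a persistently enabled $R'\in PR$ in the tail beyond the last $J$-touching transition would, by fairness of $\pi$, occur infinitely often and so contribute infinitely many transitions to $\pi\sproj J$--is sound; just note that it covers only infinite $\pi$, and the subcase of finite $\pi$ should be discharged separately (trivially, by maximality of $\pi$ plus the enabledness equivalence). Your reliance on the fact that reactions with $\comp(R')\cap J=\emptyset$ leave $\state{}\sproj J$ unchanged is exactly the assumption already used in the proof of Lemma~\ref{lem:pathproj}, so you introduce no dependency the paper does not itself have. In short: same key idea, but your write-up closes a case that the paper's two-line proof leaves implicit.
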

\begin{proof}
Suppose that $\pi\satLTL \fair$, \ie $\pi \satLTL \bigwedge_{R\in P} (\GF\  enabled(R)\limpl \GF\ \occR)$.
Let $J\subseteq I$ and $P\proj J=(AR,PR)$.
We want to prove that 
$\pi\sproj J \satLTL \fairabs$, that is
$\pi\sproj J \satLTL \bigwedge_{R\in PR} (\GF\  enabled(R)\limpl \GF\ \occR)$
This holds because of two facts (1) and (2) that can be easily checked: for any reaction $R$ from $PR$
\begin{itemize}
 \item $\state{} \satLTL \enabled(R)$ implies $\state{} \sproj J \satLTL \enabled(R)$ \hfill (1)
 \item $\state{} \satLTL \occR$ implies $\state{} \sproj J \satLTL \occR$ \hfill (2)
\end{itemize}
Analogously $\pi\sprojinf J \satLTL \bigwedge_{R\in PR} (\GF\  enabled(R)\limpl \GF\ \occR)$. 
\end{proof}

Finally, the property preservation theorem states that a successful verification of a property in the abstraction implies its truth in the original model.

\begin{theorem}
\label{thm:proppres}
For a pathway $P$ and a $J\subseteq I$ where $I$ is the component set of $P$ and $f$ an \ACTL formula we have
$\Semabs(P\proj J)\satfairabs f$ implies $\Sem(P)\satfair f$.
\end{theorem}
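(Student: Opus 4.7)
My plan is to prove a stronger statement by structural induction on the $\ACTL$ formula $f$, simultaneously for states and paths: (i) for every state $\state{}$ reachable in $\Sem(P)$, if $\Semabs(P\proj J), \state{}\sproj J \satfairabs f$ then $\Sem(P), \state{} \satfair f$; (ii) for every fair maximal path $\pi$ in $\Sem(P)$ and for whichever of $\pi\sproj J$ or $\pi\sprojinf J$ belongs to $\Semabs(P\proj J)$ by Lemma~\ref{lem:pathproj}, if that projection satisfies $f$ abstractly then $\pi \satfair f$. The theorem is then obtained from (i) applied at the initial state, since $\state{0}\sproj J$ is by construction the initial state of $\Semabs(P\proj J)$.

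The base cases reduce to the identity $s \in \state{} \iff s \in \state{}\sproj J$ for every atomic proposition $s \in \species(J)$, which is exactly the set of propositions of $\ACTL_J$; this handles $s$ and $\non s$. Conjunction and disjunction follow directly from the induction hypothesis, both in the state and path forms. For the universal quantifier $Af$, given any fair maximal path $\pi$ of $\Sem(P)$ starting in $\state{}$, Lemma~\ref{lem:pathproj} supplies a projected path $\rho \in \Semabs(P\proj J)$ starting in $\state{}\sproj J$, and Lemma~\ref{lem:fairproj} ensures $\rho \satLTL \fairabs$; hence $\rho$ is a fair maximal abstract path from $\state{}\sproj J$ and must therefore satisfy $f$. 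Part (ii) then yields $\pi \satfair f$, establishing $\Sem(P), \state{} \satfair Af$.

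The main obstacle lies in the until cases $\AU{f}{g}$ and $\AUw{f}{g}$, which require translating indices along $\pi$ to indices along its projection. The key observation is that a reaction $R$ with $\comp(R)\cap J = \emptyset$ involves no species from $\species(J)$ among its reactants, products, or catalysts, and therefore its firing leaves the projected state unchanged. Consequently, if the ``visible'' indices of $\pi = \state{0},R_0,\state{1},\ldots$ (those $i$ with $\comp(R_i)\cap J \neq \emptyset$) are $i_0 < i_1 < \cdots$, then $\state{l}\sproj J = \state{i_k}\sproj J$ for every $i_k \le l < i_{k+1}$. Combined with part (i) of the induction hypothesis and the fact that $\ACTL_J$ propositions only mention species in $\species(J)$, this implies that whenever a subformula holds at the $k$-th state of the projection, it also holds at every index of $\pi$ in $[i_k, i_{k+1})$. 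A witness index $m$ in the projection for $\AU{f}{g}$ then lifts to $i_m$ in $\pi$, and the ``for all predecessors'' clause transfers because every index of $\pi$ strictly below $i_m$ falls in some $[i_k, i_{k+1})$ with $k < m$. The weak-until case is analogous; when $\pi\sprojinf J$ is the relevant projection, its stuttering tail corresponds precisely to a suffix of $\pi$ along which no reaction with $\comp(R)\cap J \neq \emptyset$ ever fires, so the projected state remains constant there and the same index-lifting argument closes the proof.
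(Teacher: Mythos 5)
Your proof follows essentially the same route as the paper's: structural induction on $f$, with the atomic and Boolean cases handled by the agreement of a state and its projection on the propositions in $\species(J)$, and the until cases handled by invoking Lemmas~\ref{lem:pathproj} and~\ref{lem:fairproj} to obtain an abstractly fair projected path and then lifting witness indices from the projection back to $\pi$. The only difference is presentational: you make explicit the stuttering argument (reactions with $\comp(R)\cap J=\emptyset$ leave the projected state unchanged, so indices translate via the visible positions $i_0<i_1<\cdots$) that the paper uses implicitly when it asserts that every state $t$ along $\pi$ projects onto a state occurring along $\pi\sproj J$.
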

\begin{proof}
By induction on the structure of $f$ (for all $\state{}$).

$f=s$. By definition of state projection and the fact that $\AP_{R}$s are pairwise disjoint, for all atomic propositions $s$ from $\AP_{J}$ we get that $\Semabs(P\proj J),\state{}\sproj J\satfairabs s$ iff $\Sem(P),\state{} \satfair s$. Analogously for $f=\neg s$.

$f=g\land h$. From the assumption $\Semabs(P\proj J),\state{}\sproj J \satfairabs g\land h$ by CTL semantics, $\Semabs(P\proj J),\state{}\sproj J \satfairabs g$ and $\Semabs(P\proj J),\state{}\sproj J \satfairabs h$. By induction hypothesis $\Sem(P),\state{} \satfair g$ and $\Sem(P),\state{} \satfair h$. Hence, $\Sem(P),\state{} \satfair g\land h$. Case $f=g\lor h$ is proved analogously.

$f=\AUw{g}{h}$. Let $\pi$ be an arbitrary fair maximal path starting in $\state{}$. We establish $\Sem(P),\pi \satfair [g\ U_{w}\ h]$. By Lemma \ref{lem:pathproj} at least one of  $\pi\sproj J$ or $\pi\sprojinf J$ is a path in $\Semabs(P\proj J)$, and by Lemma \ref{lem:fairproj} both are abstractly fair.

Let us suppose first that $\pi\sproj J$ is the abstractly fair maximal path in $\Semabs(P\proj J)$. Hence by the assumption $\Semabs(P\proj J),\pi\sproj J \satfairabs [g\ U_{w}\ h]$. There are two cases:
\begin{enumerate}
\item $\Semabs(P\proj J),\pi \sproj J \satfairabs G\ g$. Let $t$ be any state along $\pi$. By CTL semantics $\Semabs(P\proj J),t\sproj J\satfairabs g$. By induction hypothesis we have $\Sem(P),t\satfair g$. Since $t$ was an arbitrary state of $\pi$, we get $\Sem(P),\pi\satfair G\ g$ and thus $\Sem(P),\pi \satfair g\ U_{w}\ h$.
\item $\Semabs(P\proj J),\pi \sproj J \satfairabs [g\ U\ h]$. Let $\state{}_{J}^{m''}$ be the first state along $\pi\sproj J$ that satisfies $h$. Then there is at least one state $\state{}^{m''}$ along $\pi$ such that $\state{}^{m''}\sproj J=\state{}_{J}^{m''}$. Let $\state{}^{m'}$ be first such state. By induction hypothesis $\Sem(P),\state{}^{m'}\satfair h$. From the definition of path projection any $\state{}^{m}$ with $m<m'$ projects to $\state{}^{m}\sproj J$ that is before $\state{}_{J}^{m'}$ in $\pi\sproj J$. By the assumption $\Semabs(P\proj J),\state{}^{m}\sproj J\satfairabs g$, hence by induction hypothesis $\Sem(P),\state{}^{m}\satfair g$. By CTL semantics we get $\Sem(P),\pi \satfair g\ U\ h$.
\end{enumerate}
In both cases we showed $\Sem(P),\pi \satfair g\ U_{w}\ h$. Since $\pi$ was arbitrary fair maximal path starting in $\state{}$, we conclude $\Sem(P),\state{} \satfair \AUw{g}{h}$.

The reasoning for the case in which the abstractly fair maximal path in $\Semabs(P\proj J)$ is $\pi\sprojinf J$ is analogous to the considered case.

$f=\AU{g}{h}$. Let $\pi$ be an arbitrary fair maximal path starting in $\state{}$. By Lemmas \ref{lem:pathproj} and \ref{lem:fairproj} we have that $\pi\sproj J$ or $\pi\sprojinf J$ is a fair maximal path in $\Semabs(P\proj J)$ and by the assumption $\Semabs(P\proj J),\pi\sproj J \satfairabs [{g\ U\ h}]$ or $\Semabs(P\proj J),\pi\sprojinf J \satfairabs [{g\ U\ h}]$. By the above case we get $\Sem(P),\state{} \satfair \AU{g}{h}$.
\end{proof}
\section{Experiments}\label{sec:experiments}

In this section we exploit the NuSMV model checker to perform some experiments on the model of the EGF pathway. NuSMV includes model checking algorithms that allow fairness constraints to be taken into account. We rely on such algorithms to manage fairness constraints introduced in this paper. Moreover, in order to carry out the projection and encode the resulting abstract pathway in the NuSMV format we have developed a tool (available upon request).

The first experiment is aimed at showing how modular verification could be applied to verify a global property of the pathway, namely that the final product of the pathway is always produced. This can be done in a modular way by proving sub-properties in three different model fragments obtained by projection. 

Subsequently, a number of experiments are performed with the aim of showing how the molecular components we identified in the pathway can be used to better understand the pathway dynamics. In particular, we check whether there are some molecular components that are not really necessary to obtain the final product of the pathway. This will be done by applying model checking on models in which molecular components are selectively disabled by setting their initial states to false. Also in this case the modular verification approach is adopted.

In this case study modular verification allows properties to be verified faster than on the complete model. However, modular verification is still not significantly more efficient than verification on the complete model. This is due to the projection operation we are considering at the moment, which is rather rough. In Section~\ref{sec:conclusions} we discuss why this modular verification is a promising approach for the analysis of pathways, and how we plan to improve the approach to make it substantially more efficient.

To run the experiment we used NuSMV 2.5.4 on a workstation equipped with an Intel i5 CPU 2.80 Ghz, with 8GB RAM and running Ubuntu GNU/Linux. In order to make verification faster NuSMV was executed in batch mode by enabling dynamic reordering of BDD variables and by disabling the generation of counterexamples.

\subsection{Modular verification of a global property}\label{experiments-1}

The final product of the MAP kinase cascade activated by surface and internalised EGF receptors is species \ERKPP. Since surface and internalised receptors activate two different branches of the pathway, we denote by \ERKPP the product of the branch activated by the surface receptors and by \ERKPPi the product of the branch activated by the internalised receptors.

The property to be verified is
\begin{equation}\label{global-property}
AF (\ERKPP \lor \ERKPPi)
\end{equation}

The property holds in the complete model and its verification required 260 seconds. By looking at the diagram in Figure \ref{fig:egf-pathway} we noticed that the pathway could be partitioned in three parts, with two species acting as ``gates''. These two species are $\specI$ and $\Rafstar$. Hence, we decided to try to apply modular verification by splitting property \ref{global-property} into the following three sub-properties:
\begin{align}
& AF (\specI) \label{sub-property-1}\\
& AG (\specI \implies (AF\ \Rafstar)) \label{sub-property-2}\\
& AG (\Rafstar \implies AF (\ERKPP \lor \ERKPPi)) \label{sub-property-3}
\end{align}
Property (\ref{sub-property-1}) states that in all paths of the system a state in which species $\specI$ is present is eventually reached. Property (\ref{sub-property-2}) states that whenever a state is reached in which species $\specI$ is present, then a state in which $\Rafstar$ is present is eventually reached. Finally, property (\ref{sub-property-3}) states that whenever a state is reached in which in which species $\Rafstar$ is present, then a state in which either $\ERKPP$ or $\ERKPPi$ is present is eventually reached. It is easy to see that the conjunction of (\ref{sub-property-1}), (\ref{sub-property-2}) and (\ref{sub-property-3}) implies (\ref{global-property}).

We considered three projections of the complete model to be used to verify properties (\ref{sub-property-1}), (\ref{sub-property-2}) and (\ref{sub-property-3}), respectively. In particular, from the component interaction graph of the model (shown in Figure~\ref{fig:intgraph}) we extracted the following subsets to be used for projections:
\begin{itemize}
 \item in order to verify (\ref{sub-property-1}) we considered the subset $J_1$ consisting of components $\compfont{EGF}$, $\compfont{EGFi}$, $\compfont{EGFR}$ and $\compfont{GAP}$;
 \item in order to verify (\ref{sub-property-2}) we considered the subset $J_2$ consisting of components $\compfont{EGFR}$, $\compfont{GAP}$, $\compfont{Shc}$, $\compfont{RasGDP}$, $\compfont{Grb2}$ and $\compfont{Sos}$;
 \item in order to verify (\ref{sub-property-3}) we considered the subset $J_3$ consisting of components $\compfont{RasGDP}$, $\compfont{Raf}$, $\compfont{MEK}$, $\compfont{ERK}$, $\compfont{Phosphatase1}$, $\compfont{Phosphatase2}$ and $\compfont{Phosphatase3}$.
\end{itemize}
We obtained that (\ref{sub-property-1}), (\ref{sub-property-2}) and (\ref{sub-property-3}) hold in the abstract semantics of the abstract pathways $P \proj J_1$, $P \proj J_2$ and $P \proj J_3$, respectively. Moreover, model checking required less than three seconds for (\ref{sub-property-1}), 213 seconds for (\ref{sub-property-2}) and less than one second for (\ref{sub-property-3}). Overall, modular verification required 217 seconds, that is 43 seconds less than verification on the complete model.

\subsection{Reasoning on molecular components}\label{experiments-2}

As it can be seen in the component interaction graph and in the diagram in Figure \ref{fig:egf-pathway}, some molecular components are involved in complex interactions. This is true in particular for components $\compfont{EGFR}$, $\compfont{GAP}$, $\compfont{RasGDP}$, $\compfont{Sos}$, $\compfont{Shc}$ and $\compfont{Grb2}$ which form a clique in the component interaction graph.
We are interested in understanding whether all of these components are really necessary in order to obtain the final products of the pathway.
The idea is to test whether the final species are produced when the components of interest are assumed one by one as disabled.
Molecular components $\compfont{EGFR}$ and $\compfont{RasGDP}$ are for sure necessary since they connect the clique with the other molecular components of the pathway. Consequently, we focus our analysis on $\compfont{GAP}$, $\compfont{Sos}$, $\compfont{Shc}$ and $\compfont{Grb2}$.

In order to disable a molecular component we consider as absent all of its species in the initial state of the systems. Hence, we consider a set of four (complete) models, each with one of the four components under study disabled. On each model we try to verify property (\ref{global-property}): if the property does not hold, then the component that is disabled in such a model is necessary for the pathway; on the other hand, if the property holds, then the component turns out to be not necessary since the products of the pathway can be obtained even without it. The same tests can be also done in a modular way by decomposing the pathway and the property as in Section~\ref{experiments-1}.

\begin{table}[t]
\begin{center}
\begin{tabular}{|c|c|c|c|c|c|c|c|}
\hline 
\textbf{} & \multicolumn{3}{|c|}{\textbf{Verification complete model}} &  \multicolumn{4}{|c|}{\textbf{Modular Verification}}\\
\hline
\textbf{Disabled component} & \textbf{Property} & \textbf{Result} & \textbf{Time} & \textbf{Property} & \textbf{Result} & \textbf{Time} & \textbf{Total time}\\
\hline
\multirow{3}{*}{none}& \multirow{3}{*}{(\ref{global-property})}& \multirow{3}{*}{true} & \multirow{3}{*}{260s} & (\ref{sub-property-1}) & true & 3s & \\
& & & & (\ref{sub-property-2}) & true & 213s & 217s\\
& & & & (\ref{sub-property-3}) & true & 1s & \\
\hline
$\compfont{GAP}$& (\ref{global-property})& false & 252s & (\ref{sub-property-1}),(\ref{prop-strong1}) & false,true & 2s & 2s\\
\hline
\multirow{2}{*}{$\compfont{Sos}$}& \multirow{2}{*}{(\ref{global-property})}& \multirow{2}{*}{false} & \multirow{2}{*}{253s} &(\ref{sub-property-1}) & true & 3s & \multirow{2}{*}{210s}\\
& & & & (\ref{sub-property-2}),(\ref{prop-strong2}) & false,true & 207s & \\
\hline
\multirow{3}{*}{$\compfont{Shc}$}& \multirow{3}{*}{(\ref{global-property})}& \multirow{3}{*}{true} & \multirow{3}{*}{252s} & (\ref{sub-property-1}) & true & 3s & \multirow{3}{*}{212s}\\
& & & & (\ref{sub-property-2}) & true & 208s &\\
& & & & (\ref{sub-property-3}) & true & 1s &\\
\hline
\multirow{2}{*}{$\compfont{Grb2}$}& \multirow{2}{*}{(\ref{global-property})}& \multirow{2}{*}{false} & \multirow{2}{*}{253s} & (\ref{sub-property-1}) & true & 3s & \multirow{2}{*}{211s}\\
& & & & (\ref{sub-property-2}),(\ref{prop-strong2}) & false,true & 208s &\\
\hline
\end{tabular}
\end{center}
\caption{Model checking results and comparison of verification times}\label{tab:comparisonpathway}
\end{table}

In Table \ref{tab:comparisonpathway} we summarise the property verification results and compare verification times obtained by model checking the complete models and by following the modular approach. The first row of data in the table reports verification results in which no component is disabled (as in Section~\ref{experiments-1}). The other results show that $\compfont{Shc}$ is not a necessary component, whereas all of the other three are. As previously, the time required by modular verification is smaller than the one required by model checking the complete model. This is true in particular in the case in which $\compfont{GAP}$ is disabled since property (\ref{sub-property-1}), the verification of which is very fast, turns out to be false.

Note that in the case of modular verification of the models in which $\compfont{GAP}$, $\compfont{Sos}$ and $\compfont{Grb2}$ were disabled we needed to verify some additional properties. In particular, in the case of $\compfont{GAP}$ we have that property (\ref{sub-property-1}) does not hold in the abstract semantics of $P\proj J_1$, and in the cases of $\compfont{Sos}$ and $\compfont{Grb2}$ property (\ref{sub-property-2}) does not hold in the abstract semantics of $P \proj J_2$. We remark that our modular verification approach guarantees only that properties proved to hold in a model fragment also hold in the complete model. Nothing can be said, instead, of properties that does not hold in the model fragments. In order to avoid applying model checking on the complete model to check whether these properties hold there, we consider some new properties whose satisfaction in suitable model fragments implies that properties (\ref{sub-property-1}) and (\ref{sub-property-2}) actually do not hold.
In order to prove that (\ref{sub-property-1}) is actually false when $\compfont{GAP}$ is disabled we consider the following property:
\begin{equation}\label{prop-strong1}
 AG (\neg \specI)
\end{equation}
In order to prove that (\ref{sub-property-2}) is actually false when either $\compfont{Sos}$ or $\compfont{Grb2}$ is disabled we consider the following property:
\begin{equation}\label{prop-strong2}
 AG (\neg\Rafstar)
\end{equation}

Note that it is convenient to verify properties (\ref{prop-strong1}) and (\ref{prop-strong2}) together with (\ref{sub-property-1}) and (\ref{sub-property-2}), respectively. This avoids spending twice the time needed by the model checker to construct the data structure necessary to perform the verification. In the case of our experiments the construction of such data structures takes usually the 98\%-99\% of the verification time. Times reported in Table~\ref{tab:comparisonpathway} are based on this optimisation.

\section{Discussion and conclusions}
\label{sec:conclusions}

In this paper we presented preliminary results in the development of a modular verification framework for biochemical pathways.
We defined a modelling notation for pathways associated with a formal semantics and a notion of fairness that allows the dynamics to be accurately described by avoiding starvation situations among reactions. Moreover, we investigated a notion of molecular component of a pathway and we provided a methodology to infer molecular components from pathways the reactions of which satisfy some assumptions. Molecular components were then used by a projection operation that allows abstract pathways modelling an over-approximation of the behaviour of a group of components to be obtained from a pathway model. 
The fact that a property expressed by means of the \ACTL logic holds in an abstract pathway was shown to imply that they hold also in the complete pathway model. This preservation is at the basis of the modular verification approach which was demonstrated on a well-established model of the EGF pathway.

The results of experiments given in Section~\ref{sec:experiments} show that our modular verification approach allows properties to be verified in a shorter time than in the case of verification of the complete pathway model. However, in most of the cases the time saved was relatively small ($\sim 15\%$). We believe that the cause of this limited gain in efficiency is due to the projection operation we are considering at the moment, which is still somewhat rough. Our plan to improve efficiency is to define a projection operation that combines the current one (that essentially removes some molecular components from the model) with another that somehow minimises the description of components not removed by the model, but whose role in the property to be verified is marginal. In the case of the considered case study this would allow, for example, to reduce the size of the model of the components constituting the clique in the component interaction graph in Figure~\ref{fig:intgraph} by focusing on 
components $\compfont{EGFR}$ and $\compfont{RasGDP}$, and by minimising the description of components $\compfont{GAP}$, $\compfont{Shc}$, $\compfont{Sos}$ and $\compfont{Grb2}$. This would allow for a significant improvement in modular verification efficiency.

\bibliographystyle{eptcs}
\bibliography{refs_new}


\end{document}